\definecolor{myurlcolor}{rgb}{0,0,0.9}
\newcommand{\proj}[1]{| #1\rangle\!\langle #1 |}
\newcommand{\inner}[2]{\langle #1 , #2\rangle}
\DeclareMathOperator{\trace}{Tr}
\newcommand{\Ptr}[2]{\trace_{#1}\Pa{#2}}
\newcommand{\Tr}[1]{\Ptr{}{#1}}
\newcommand{\Pa}[1]{\left[#1\right]}
\theoremstyle{plain}
\newtheorem{thm}{Theorem}
\newtheorem{lem}[thm]{Lemma}
\newtheorem{Def}[thm]{Definition}
\newtheorem{Exam}[thm]{Example}
\newenvironment{example}%
  {\begin{Exam}\upshape}{\end{Exam}}
\newcommand*{\myproofname}{Proof}
\def\ot{\otimes}
\def\cM{\mathcal{M}}
\def\cD{\mathcal{D}}
\def\cH{\mathcal{H}}
\newcommand{\be}{\begin{equation}}
\newcommand{\ee}{\end{equation}}
\newcommand{\beq}{\begin{eqnarray}}
\newcommand{\eeq}{\end{eqnarray}}
\DeclareMathAlphabet{\mathcal}{OMS}{cmsy}{m}{n}
\begin{document}
\title{Magic Class and the Convolution Group}

\author{Kaifeng Bu}
\email{kfbu@fas.harvard.edu}
\affiliation{\it Department of Physics, Harvard University, Cambridge, MA 02138, USA
}

\author{Arthur Jaffe}
\email{Arthur\_Jaffe@harvard.edu}
\affiliation{\it Department of Physics, Harvard University, Cambridge, MA 02138, USA
}
 \affiliation{\it Department of Mathematics, Harvard University, Cambridge, MA 02138, USA}

\author{Zixia Wei}
\email{zixiawei@fas.harvard.edu}
\affiliation{\it Department of Physics, Harvard University, Cambridge, MA 02138, USA
}
\affiliation{\it Society of Fellows, Harvard University, Cambridge, MA 02138, USA}

\begin{abstract}
The classification of many-body quantum states plays a fundamental role in 
the study of quantum phases of matter. In this work, we propose an approach to classify quantum states by 
 introducing the concept of magic class. 
In addition, we introduce an efficient coarse-graining procedure to extract the magic feature of states, which we call the ``convolution group (CG).''   We classify quantum states into different magic classes using the fixed points of the CG and circuit equivalence. We also show that magic classes can be characterized by symmetries and the quantum entropy of the CG fixed points. 
Finally, we discuss
the connection between the CG and the renormalization group. 
These results may provide new insight into the study of the state classification and quantum phases of matter.
\end{abstract}

\maketitle

\newpage

\section{Introduction}
Classifying quantum states according to certain features is essential in understanding quantum many-body systems with large degrees of freedom. One famous example is the classification of quantum phases of matter. 
Quantum phase~\cite{carr2010,sachdev_2011} is a foundational concept,
particularly in the realm of condensed matter physics and quantum field theory. 
The exploration of quantum phases involves categorizing distinct states in quantum many-body systems. These phases are characterized by symmetries~\cite{Landau37}, correlations~\cite{CGW10}, or other physical properties. 

Given a many-body quantum state, a practical way to tell its quantum phase is by applying the renormalization group (RG) method ~\cite{Kadanoff66,Wilson74,Fisher98}. Besides the
classifications of quantum phases of matter~\cite{CGW10,Yoshida11}, the RG has many applications including the
critical phenomena~\cite{Wilson74}, the efficient simulation of quantum-many body systems~\cite{White92,Vidal05}, neural networks~\cite{mehta2014exact,Ringel18,GordonPRL21,Kline22,CotlerPRD23,Cotler23} and quantum error correction~\cite{Almheiri_2015,Pastawski_2015,Yang_2016,Cong19, cong2022enhancing,lake2022exact}.
RG is an important tool for understanding the macroscopic behavior of systems and elucidating the diverse phases of matter in quantum many-body systems. 

In this paper, we focus on one important quantum feature called magic (or non-stabilizerness)~\cite{BravyiPRA05}.
Stabilizer states were first introduced by  Gottesman~\cite{Gottesman97} in the study of quantum error correction codes (also known as stabilizer codes). Moreover, the Gottesman-Knill theorem shows  that quantum circuits with stabilizer input states, Clifford unitaries, and Pauli measurements can be simulated efficiently on a
  classical computer~\cite{gottesman1998heisenberg}. This suggests that nonstabilizerness is key for quantum computational advantages. The term ``magic'' was introduced by Bravyi and Kitaev~\cite{BravyiPRA05} to describe nonstabilizerness. The extension of the Gottesman-Knill theorem was further studied beyond stabilizer circuits~\cite{BravyiPRL16,BravyiPRX16,bravyi2019simulation,BeverlandQST20,SeddonPRXQ21, bu2022classical,gao2018efficient,Bu19,UmeshSTOC23}.
  Furthermore, a resource theory of magic was developed~\cite{Veitch12mag,Veitch14,Bucomplexity22,BuPRA19_stat,RallPRA19,RyujiPRL19,WangNJP19,LeonePRL22, Chen22,
Haug23,HaugPRB23, WangPRA23,haug2023efficient,HL2023stabilizer}.

Recently, magic, as a quantum feature, has also been used to study quantum phases of matter 
and phase transitions~\cite{Ellison2021symmetryprotected,LiuPRXQ22,WhitePRB21,fux2023,niroula2023,leone2023phase,ZhouSciPhy20,True2022transitionsin,HammaPRA22,tarabunga2023critical}. 
  For example,  magic has been introduced to 
  study the complexity of the quantum phases of matter, e.g., symmetry-protected topological (SPT) phases. Furthermore, new concepts like symmetry-protected magic were introduced~\cite{Ellison2021symmetryprotected,LiuPRXQ22}. Moreover, 
  the magic of the critical point of a
  Potts model  has  been studied to show that  the conformal field theory
is magical \cite{WhitePRB21}. 

In this work, we propose a systematic way to classify quantum states into different ``magic classes'' according to their magic features. After presenting one approach based on the equivalence relation induced by the Clifford circuits, we introduce a coarse-graining map, called the ``convolution group (CG)'', as a practical way to classify the magic classes in terms of the fixed points of CG and establish a connection between the two approaches. 
The construction of CG is based on the quantum convolution~\cite{BGJ23a,BGJ23b,BGJ23c}, which has been used to prove the quantum central limit theorem~\cite{BGJ23a,BGJ23b,BGJ23c}, the extremality of stabilizer states~\cite{BJ24a}, and quantum inverse sumset theorem~\cite{BGJ24a}. We show that the magic class can be characterized by symmetries and the quantum entropy of the CG fixed points. We will also show that the CG is efficient in the since that it converges rapidly to the fixed points. The mathematical structure of the magic class and the CG is inspired by that of quantum phases and RG, and we will present a short review of the latter (see the example of 1D transverse-field Ising in the Preliminary), and comment on their connections with the former throughout the paper.

\section{Preliminaries}
We focus on an $n$-qudit system denoted as $\mathcal{H}^{\ot n}$, where $\mathcal{H}$ is a $d$-dimensional Hilbert space. For simplicity, we take $d$ to be a prime number. We fix an orthonormal basis $\{\ket{i}\}_{i \in \mathbb{Z}_d}$ for $\mathcal{H}$, where $\mathbb{Z}_d$ is the cyclic group of order $d$. The orthonormal basis for $\mathcal{H}^{\ot n}$ is denoted as $\set{\ket{\vec i} \equiv | i_1 \rangle \otimes \cdots \otimes | i_n \rangle}$, known as the computational basis. We use $\cD(\cH^{\otimes n})$ to denote the set of $n$-qudit states.
In the following, we denote a vector state by the  Dirac notation $\ket{\psi}$, and the corresponding pure-state density matrix by  
$\psi$.

The single-qudit Pauli operators $X$ and $Z$ are defined as 
$
    X: \ket{k} \mapsto \ket{k+1}$, $
    Z: \ket{k} \mapsto \omega^k_d \ket{k}
$, for all $k\in \mathbb{Z}_d$, where $\omega_d=e^{i\frac{2\pi}{d}}$. A generalized Pauli matrix is labeled by $p,q\in \mathbb{Z}_d$ and defined as 
\begin{align}                      
w(p,q)=\xi^{-pq}\, Z^pX^q, 
\end{align}
where $\xi=i$ for $d=2$, and $\xi=\omega^{(d+1)/2}_d$ for  odd $d$. 
The Weyl operators on the $n$-qudit system are defined as tensor products of generalized Pauli operators on each site, i.e.,
$
w(\vec p, \vec q)
=w(p_1, q_1)\ot...\ot w(p_n, q_n)
 $, with $\vec p=(p_1,...,p_n), \vec q=(q_1,...,q_n)$.
Let us take $V^n:=\mathbb{Z}^n_d\times \mathbb{Z}^n_d$ for simplicity, where $\vec x =(\vec p, \vec q)\in V^n$.
The Weyl operators form an orthonormal basis for linear operators on $\cH^{\otimes n}$ with respect to the inner product $\inner{A}{B}=\frac{1}{d^n}\Tr{A^\dag B}$. Thus, any quantum state $\rho$ can be expressed via Weyl operators as
$\rho=\frac{1}{d^n}
\sum_{\vec x\in V^n}
\Xi_{\rho}(\vec x)w(\vec x)$, where the characteristic function is
\begin{align}
\Xi_{\rho}(\vec x):=\Tr{\rho w(-\vec x)}.
\end{align}
Note that the process of computing the characteristic functions can be regarded as a quantum Fourier transform on qudits, which has lots of applications, including quantum Boolean functions~\cite{montanaro2010quantum}, quantum circuit complexity~\cite{Bucomplexity22}, quantum scrambling~\cite{GBJPNAS23}, classical simulation of quantum circuits~\cite{gao2018efficient,Bu19,UmeshSTOC23},  the generalization capacity 
 of quantum machine learning~\cite{BuPRA22_stat,BuQST23_stat}, and 
  quantum state 
 tomography~\cite{Bunpj22}.

{\bf Clifford unitaries and stabilizer states.}
 The Clifford unitaries on $n$ qudits are the unitaries that map Pauli operators to Pauli operators.
Stabilizer states
are pure states of the form $U_{C}\ket{0}^{\ot n}$, where $U_{C}$ is a Clifford unitary.
In the literature, 
 a convex combination of pure stabilizer states is also referred to as a stabilizer state. 

{\bf Quantum phase of matter.}
There are several ways to characterize phases in quantum many-body systems. To address their connections and differences, let us take the 1D transverse-field Ising (TFI) model as an example. The Hamiltonian of the 1D TFI model defined on a spin chain with $N$ qubits is given by 
\begin{align}
    H = -\sum_{i=1}^{N} Z_i Z_{i+1} - g\sum_{i=1}^{N} X_i, 
\end{align}
where $Z_i,X_i$ are the Pauli operators on the $i$-th site. Here we take the periodic boundary condition and identify $i=1$ with $i=N+1$. 
The ground state, denoted as $\ket{\Omega_g}$, is in the ferromagnetic phase for $0<g<1$ and 
in the paramagnetic phase for $g>1$, with $g=1$ indicating the phase transition point.

1. \textit{Fixed points under RG flow:} Quantum phases are characterized by fixed points under the RG flow. The ferromagnetic and paramagnetic phases correspond to stable fixed points 
$g\to 0$, 
and 
$g\to \infty$, respectively.

2. \textit{Spontaneously symmetry breaking (SSB):} The TFI model has a global 
$\mathbb{Z}_2$ symmetry generated by $\prod_{i=1}^N X_i$, preserved in the paramagnetic phase but broken in the ferromagnetic phase at the thermodynamic limit $N\rightarrow\infty$. This is reflected  by the ground state degeneracy at 
$g=0$, even before taking the thermodynamic limit. 

3.\textit{ Entanglement entropy at RG fixed points:} The entanglement entropy at 
$g\rightarrow0$ is $\log 2$ for any spatial bipartition, while at $g\to \infty$ it is zero.  This reflects the ground state degeneracy in the thermodynamic limit.

4. \textit{Local unitary quantum circuit:} States in the same phase are connected by a finite-depth local unitary quantum circuit, preserving long-range entanglement~\cite{CGW10}. This approach is now a standard way of defining quantum phases for ground states of gapped Hamiltonians.

Note that, in the 1D TFI model, these four characterizations are equivalent due to the model's simplicity. However, in more general cases, each characterization has limitations, and their relationships are unclear. RG fixed points, SSB, and changes in entanglement entropy may not always accompany phase transitions, and finite-depth circuits may not effectively classify gapless phases. 

{\bf Our proposed classification.}
We will introduce and study the classification of quantum states according to their magic features,
inspired by these characterizations of the quantum phases. Our method is based on the key idea of the CG, which is a 
coarse-graining map that is different from the RG. We  explain  our results in detail in the following section.

\section{Main results}
We start by introducing a method to classify quantum states into different equivalence classes, determined by connectivity through Clifford circuits.

{\bf Circuit magic class:}
    Given two $n$-qudit pure states $\ket{\psi}$ and $\ket{\phi}$, we say that they are in the same circuit magic class if  there exists a Clifford unitary $U_C$ such that $U_C\ket{\psi}=\ket{\phi}$. 

We refer to this equivalence class as the ``circuit magic class'' since it is defined by using Clifford circuits, and the properties of magic remain unchanged under Clifford unitaries. Note that this definition is inspired by the gapped phase in quantum matter, where the local unitary circuit replaces the Clifford circuit.

{\bf \textit{Convolution group.---}}
Now, let us construct an iterative coarse-graining procedure to extract the information of magic. 
In the following, we restrict $d$ to be an odd prime integer for simplicity. 
These results also apply to the case of $d=2$ with a modified quantum convolution, presented in  Appendix~\ref{appen:prop_con}.
We will usethe quantum convolution between two states~\cite{BGJ23a}:
Given $s,t \in \mathbb{Z}_d$ which satisfy $s^2+t^2\equiv 1 ~({\rm mod}~d)$, the  unitary  operator  $U_{s,t}$  acting on  a $2n$-qudit system $\mathcal{H}_A\ot \mathcal{H}_B $ is
$U_{s,t}:\ket{\vec i}\ot\ket{\vec j}\to \ket{s\vec i+t\vec j  \mod d}\ot \ket{- t\vec i+s\vec j\mod d} $,
where   both $\mathcal{H}_A$ and $\mathcal{H}_B$ are $n$-qudit systems.     
The convolution of two $n$-qudit states $\rho$ and $\sigma$ is 
\begin{align}\label{eq:conv_B}
\rho \boxtimes_{s,t} \sigma = \Ptr{B}{ U_{s,t} (\rho \otimes \sigma) U^\dag_{s,t}}.
\end{align}
Here, we consider the nontrivial parameters $s,t$, i.e., neither of them is $0$ or $1$.
Some useful properties of the quantum convolution are listed in Appendix~\ref{appen:prop_con}.

Accordingly, we can define the self-convolution for any $n$-qudit state $\rho$ as follows
\begin{align}
        \boxtimes \rho = \rho \boxtimes_{s,t} \rho.
\end{align}
In general, we can define the $L$-fold self-convolution inductively as
$\boxtimes_L \rho = \boxtimes (\boxtimes_{L-1} \rho)$, where
$\boxtimes_1 \rho = \boxtimes \rho$. 
It is important to note that the self-convolution is a map from $\cD({\cH^{\otimes n}})$ to $\cD({\cH^{\otimes n}})$, which is neither linear nor irreversible. Additionally, the self-convolution satisfies associativity, and thus it generates  a semigroup, along with the identity map. With this, we utilize the self-convolution as a coarse-graining map to construct a CG:

{\bf Construction of the CG:}
    The convolution group (CG) is formed by the $L$-th self-convolution $\boxtimes_{L}$, for  $(L=1,2,\cdots)$, along with the identity map. See Fig.~\ref{fig:mag_class} for a sketch.

\begin{figure}
    \includegraphics[width=8.0cm]{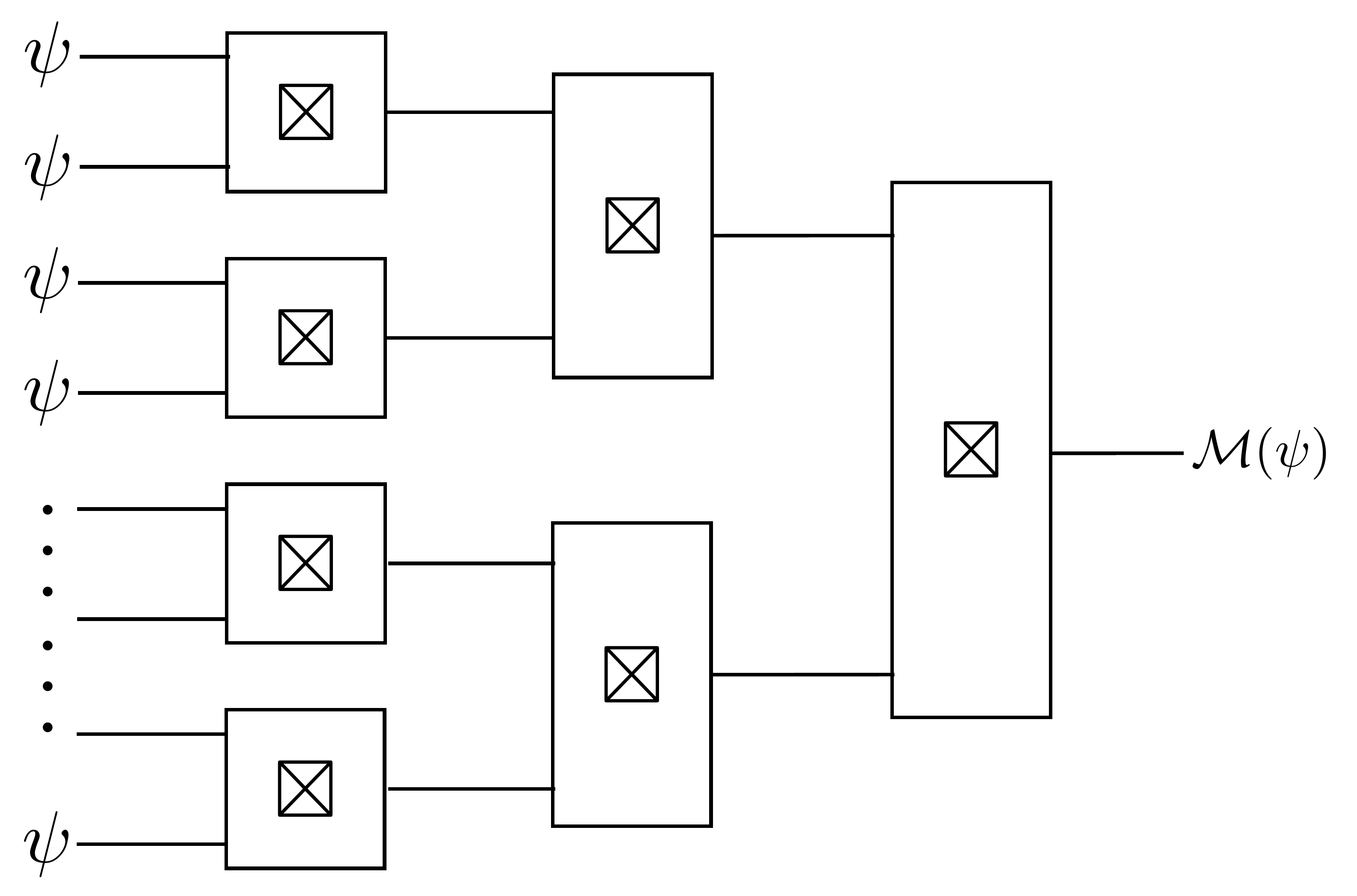}
    \caption{A diagram of the CG as an iterative coarse-graining quantum circuit. Pure $n$-qudit  states can be classified into $n+1$ CG magic classes according to the fixed point under the CG flow.}
    \label{fig:mag_class}
\end{figure}

{\bf The fixed points of the CG:}
To investigate the fixed points under the CG, it is convenient to consider them in terms of Weyl operators.
The concept of mean states~\cite{BGJ23a,BGJ23b} is crucial in this context. For an $n$-qudit state $\rho$, its mean state $\mathcal{M}(\rho)$ is defined by the characteristic function as follows:
\begin{align}\label{0109shi6}
\Xi_{\mathcal{M}(\rho)}(\vec x) :=
\left\{
\begin{aligned}
&\Xi_\rho ( \vec x) , && |\Xi_\rho ( \vec x)|=1,\\
& 0 , && |\Xi_\rho (  \vec x)|<1.
\end{aligned}
\right.
\end{align}
Moreover, the set of Weyl operators $\set{w(\vec x):\Xi_{\mathcal{M}(\rho)}(\vec x)\neq 0}$ 
form an abelian subgroup( see Lemma 12 in ~\cite{BGJ23b}), which we call the stabilizer group of $\mathcal{M}(\rho)$, and denote  as $G_{\rho}$ for simplicity.

In fact, the mean state is the fixed point of the CG. Because,
the quantum convolution of two states yields multiplication of  the quantum Fourier coefficients, i.e.  $\Xi_{\rho\boxtimes_{s,t}\sigma}(\vec x)=\Xi_{\rho}(s\vec x)\Xi_{\sigma}(t\vec x)$. (See Appendix~\ref{appen:prop_con}.) As a consequence, $|\Xi_{\boxtimes_L\rho}(\vec x)|$ will converge to 0 as $L\to \infty$, for any $\vec x\in V^n$ for which $|\Xi_\rho (  \vec x)|<1$.  Hence, the fixed points of the CG are equivalent to a mean state, up to  Weyl operators; this is the ``quantum central limit theorem'' for  DV systems~\cite{BGJ23a, BGJ23b, BGJ23c}.   Thus  we can represent the fixed points of the CG in terms of  mean states.

{\bf Remark:}
While the CG can be easily performed iteratively on the computational basis, it is convenient to analyze fixed points analytically by the quantum Fourier transform from the computational basis to the Weyl  operator basis. RG has a similar mathematical feature, where the real-space RG can be easily performed iteratively, but for analytical analysis of fixed points, it is convenient to Fourier transform to the momentum space and perform the momentum space RG.

Now we can
classify the states into different equivalence classes based on the fixed points of the CG, which we term
 CG magic classes.

 {\bf CG magic class:}
    We say that two $n$-qudit pure states $\ket{\psi}$ and $\ket{\phi}$  are in the same CG magic class if  their fixed points are equivalent up to a Clifford unitary.

We find that the circuit magic class and CG magic class have the following relation with the proof given in 
Appendix~\ref{appen:prop_con}.

\begin{thm} [\bf Relating the magic classes]\label{thm:equiv}
    Any two pure, $n$-qudit states $\ket{\psi}$ and $\ket{\phi}$  in the same circuit magic class are in the same CG magic class. 
\end{thm}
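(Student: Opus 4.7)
The plan is to show that the mean state operation $\mathcal{M}$, which gives the CG fixed point, is covariant under Clifford conjugation. This reduces the theorem to a direct computation on characteristic functions.

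First I would invoke the identification (established earlier in the excerpt via the quantum central limit theorem) that the CG fixed point of any $n$-qudit pure state $\rho=|\psi\rangle\!\langle\psi|$ is, up to a Weyl operator, the mean state $\mathcal{M}(\rho)$ defined by the phase-truncation rule \eqref{0109shi6}. Since Weyl operators are themselves Clifford, it suffices to prove the operator identity
\begin{equation}
\mathcal{M}\bigl(U_C\,\rho\,U_C^\dagger\bigr)\;=\;U_C\,\mathcal{M}(\rho)\,U_C^\dagger
\end{equation}
for every Clifford unitary $U_C$. Given this, if $|\phi\rangle=U_C|\psi\rangle$, then the fixed points of $\rho$ and $\phi$ differ by the Clifford $U_C$ (composed with at most two Weyl operators coming from the central-limit convergence), so $|\psi\rangle$ and $|\phi\rangle$ lie in the same CG magic class.

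Next I would verify the covariance identity at the level of characteristic functions. The defining property of Clifford unitaries is that they normalize the Weyl group: for each $U_C$ there is a symplectic linear map $S:V^n\to V^n$ and a phase function $\lambda:V^n\to U(1)$ with
\begin{equation}
U_C^\dagger\,w(\vec y)\,U_C \;=\; \lambda(\vec y)\,w(S\vec y).
\end{equation}
Plugging this into the definition of the characteristic function gives
\begin{equation}
\Xi_{U_C\rho U_C^\dagger}(\vec x)=\mathrm{Tr}\bigl[\rho\,U_C^\dagger w(-\vec x)U_C\bigr]=\lambda(-\vec x)\,\Xi_\rho(S\vec x),
\end{equation}
so $|\Xi_{U_C\rho U_C^\dagger}(\vec x)|=|\Xi_\rho(S\vec x)|$. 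Because the phase $\lambda(-\vec x)$ is unimodular and $S$ is a bijection, the phase-truncation \eqref{0109shi6} commutes with this transformation: on the support where $|\Xi_\rho(S\vec x)|=1$ we keep $\lambda(-\vec x)\,\Xi_{\mathcal{M}(\rho)}(S\vec x)$, and elsewhere we set it to zero. This gives $\Xi_{\mathcal{M}(U_C\rho U_C^\dagger)}(\vec x)=\lambda(-\vec x)\,\Xi_{\mathcal{M}(\rho)}(S\vec x)=\Xi_{U_C\mathcal{M}(\rho)U_C^\dagger}(\vec x)$, which by the orthogonality of the Weyl basis yields the desired operator identity.

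The main obstacle, though a mild one, is bookkeeping of the phase $\lambda$ and the residual Weyl factors. The quantum central limit theorem does not force $\boxtimes_L\rho$ to converge to $\mathcal{M}(\rho)$ on the nose; convergence can be up to a Weyl operator that depends on the parity of $L$ and on the phases of $\Xi_\rho$ on its support. I would absorb these residual Weyl operators into the Clifford that witnesses equivalence, which is legitimate since the CG magic class is defined modulo Clifford unitaries. Once this is done, the theorem follows immediately from the covariance identity above, and no separate argument about commuting $U_{s,t}$ past $U_C\otimes U_C$ is needed — the Fourier picture does all the work.
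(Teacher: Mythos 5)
Your proof is correct, but it takes a genuinely different route from the paper's. The paper invokes the commutativity of the convolution with Clifford unitaries (property 5 of Lemma~\ref{lem:key_tech}): for each Clifford $U$ there is a Clifford $V$ with $\boxtimes(U\rho U^\dagger)=V(\boxtimes\rho)V^\dagger$, iterates this to get $\boxtimes_L(U\rho U^\dagger)=V_L(\boxtimes_L\rho)V_L^\dagger$, and then passes to the fixed point. You instead bypass the convolution entirely and prove the exact covariance $\mathcal{M}(U_C\rho U_C^\dagger)=U_C\,\mathcal{M}(\rho)\,U_C^\dagger$ directly from the normalizer structure of the Clifford group: since $U_C^\dagger w(-\vec x)U_C=\lambda(-\vec x)\,w(-S\vec x)$ with $\lambda$ unimodular and $S$ a symplectic bijection of $V^n$, the moduli of the characteristic function are permuted by $S$, so the phase-truncation rule \eqref{0109shi6} commutes with Clifford conjugation. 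Both arguments are valid. Your Fourier-side argument is more self-contained (it needs only the definition of $\mathcal{M}$ and the Weyl-normalizer property, not the convolution lemma) and it actually yields a slightly cleaner conclusion — an exact operator identity for the mean states rather than equivalence "up to a Clifford" extracted from a limit of the $V_L$; it also sidesteps the question, left implicit in the paper, of how the sequence $V_L$ behaves as $L\to\infty$. What the paper's route buys in exchange is a stronger statement along the whole flow: the two CG trajectories are Clifford-equivalent at every finite depth $L$, not only at the fixed point, which matters for the finite-$L$ convergence estimates discussed later. Your handling of the residual Weyl operators from the central limit theorem (absorbing them into the witnessing Clifford) is appropriate and consistent with the paper's own identification of fixed points with mean states up to Weyl operators.
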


Now, we have established a connection between our definition of the circuit magic class and the CG magic class, demonstrating that the 
CG magic class provides a distinct yet non-trivial classification of pure $n$-qudit states. This situation is similar to the classification of quantum phases, where phases can be classified by either the fixed points of the RG or equivalence classes with respect to local unitary circuits. Although these two classifications are not strictly equivalent, they often coincide in practice.

{\bf \textit{Classification of CG magic classes.---}}
 Next, let us explore the classification of CG magic classes. Note that there exist $n+1$ distinct equivalence classes based on the size of the corresponding stabilizer group of the fixed points.
 The size of the stabilizer group can take values of $d^k$, where $k$ is any integer satisfying $0\leq k\leq n$. Hence, we can say a state $\psi$ is in the $k$-th CG magic class when the stabilizer group size is $d^{n-k}$. In fact, the integer $k$ directly reflects the number of non-Clifford gates required in the state preparation, as illustrated in the following example.

\begin{example}
   Let us consider an $n$-qudit system and a family of  quantum states 
as follows
\begin{eqnarray}
    \ket{\psi_k}
    =U_{C}\ket{\text{magic}}^{\ot k}\ot \ket{0}^{\ot n-k},
\end{eqnarray}
where $U_{C}$ is a Clifford unitary,  $ \ket{\text{magic}}$ is a single-qudit state chosen as $(\ket{0}+\ket{1})/\sqrt{2}$ for the local dimension $d$ being odd , or 
$(\ket{0}+e^{i\pi/4}\ket{1})/\sqrt{2}$ for $d=2$.
These states play an important role in the magic-based quantum computation~\cite{jozsa2014classical,koh2015further,BravyiPRX16,Yoganathan19}.
Computing the corresponding mean state 
$\mathcal{M}(\psi)$, one can find that $\psi_k$ is in the $k$-th magic class.

\end{example}

{\bf\textit{ Symmetry characterization.---}}
Let us then see that the CG magic class can also be characterized by the symmetries. Quantum phase transitions often (though not always) come along with spontaneous symmetry breaking, where different phases have distinct symmetries. In the 1D TFI model we have reviewed, in the thermodynamic limit, the paramagnetic phase has the $\mathbb{Z}_2$ symmetry while the ferromagnetic phase does not. These features manifest at the RG fixed points. Here, 
we find that our CG magic class also presents a symmetry characterization as follows.

\begin{thm}[\bf Symmetry characterization of CG magic class]\label{thm:sym}
Any $n$-qudit pure state $\ket{\psi}$  is in the $k$-th CG magic class, if and only if the number of the Weyl operators $w(\vec a)$ such that $[\mathcal{M}(\psi), w(\vec a)] = 0$ is $d^{n+k}$.

Equivalently, $\psi$ is in the $k$-th CG magic class if and only if the number of the Weyl operators $w(\vec a)$ such that: 1. $[\mathcal{M}(\psi), w(\vec a)]=0$ and 2. $w(\vec a)$ is not in the stabilizer group 
of $\mathcal{M}(\psi)$,
is $d^{2k}$.
\end{thm}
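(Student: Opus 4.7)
The plan is to translate the commutation condition $[\mathcal{M}(\psi), w(\vec a)]=0$ into a linear condition on $\vec a$ inside the symplectic vector space $V^n = \mathbb{Z}_d^{2n}$, and then count via an elementary symplectic dimension argument. Recall that the Weyl operators obey $w(\vec a)\, w(\vec x) = \omega_d^{\sigma(\vec a, \vec x)}\, w(\vec x)\, w(\vec a)$ for the standard symplectic form $\sigma\bigl((\vec p, \vec q),(\vec p', \vec q')\bigr) = \vec p\cdot \vec q' - \vec q \cdot \vec p'$ on $V^n$. Using the Weyl-basis expansion of the mean state, $\mathcal{M}(\psi) = \frac{1}{d^n}\sum_{\vec x \in G_\psi} \Xi_{\mathcal{M}(\psi)}(\vec x)\, w(\vec x)$, the commutator becomes
\begin{equation*}
[w(\vec a), \mathcal{M}(\psi)] = \frac{1}{d^n}\sum_{\vec x \in G_\psi} \Xi_{\mathcal{M}(\psi)}(\vec x)\bigl(1 - \omega_d^{-\sigma(\vec a, \vec x)}\bigr)\, w(\vec a)w(\vec x).
\end{equation*}

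First I would invoke the orthogonality of the Weyl operators under the Hilbert--Schmidt inner product $\inner{A}{B} = \frac{1}{d^n}\Tr{A^\dag B}$ to argue that the above vanishes if and only if each coefficient vanishes, i.e.\ $\sigma(\vec a, \vec x) \equiv 0 \pmod d$ for every $\vec x \in G_\psi$. Equivalently, $\vec a$ lies in the symplectic complement $G_\psi^\perp = \{\vec b \in V^n : \sigma(\vec b, \vec x) = 0, \forall \vec x \in G_\psi\}$. Then I would apply standard symplectic linear algebra over $\mathbb{F}_d$ (valid since $d$ is prime): the form $\sigma$ is non-degenerate, so $|G_\psi|\cdot|G_\psi^\perp| = d^{2n}$. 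If $\psi$ is in the $k$-th CG magic class, $|G_\psi| = d^{n-k}$ by definition, whence $|G_\psi^\perp| = d^{n+k}$. The converse is immediate because the magic class is determined by $|G_\psi|$, and $|G_\psi| = d^{2n}/|G_\psi^\perp|$. This establishes the first equivalence.

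For the second ``equivalently'' form, I would use the abelianness of $G_\psi$ (Lemma~12 of~\cite{BGJ23b}), which precisely states the isotropy inclusion $G_\psi \subseteq G_\psi^\perp$. The quotient $G_\psi^\perp/G_\psi$ then carries a non-degenerate induced symplectic form of dimension $2k$, so it has cardinality $d^{2k}$, matching the stated count of Weyl operators that commute with $\mathcal{M}(\psi)$ but are genuinely outside the stabilizer group (viewed modulo $G_\psi$, which is the natural identification since elements of $G_\psi$ multiply $\mathcal{M}(\psi)$ by a pure phase that is absorbed into the stabilizer data).

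The main obstacle, and the only non-formal step, is the reduction from ``$w(\vec a)$ commutes with the whole sum $\mathcal{M}(\psi)$'' to ``$w(\vec a)$ symplectically commutes with every generator $\vec x \in G_\psi$''. This relies squarely on the linear independence of distinct Weyl operators under the trace inner product, together with the fact that on the support $G_\psi$ the coefficients $\Xi_{\mathcal{M}(\psi)}(\vec x)$ have unit modulus and hence are nonzero. Once this step is secure, the remainder is a dimension count in a symplectic $\mathbb{F}_d$-vector space, together with the observation that $G_\psi$ is automatically isotropic.
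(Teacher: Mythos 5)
Your proof is correct, and it takes a genuinely different route from the paper's. The paper argues via a normal form: it invokes Lemma~12 of~\cite{BGJ23b} to get the abelian stabilizer group $G_\psi$ of size $d^{n-k}$, then conjugates its generators $g_1,\dots,g_{n-k}$ to $Z_1,\dots,Z_{n-k}$ by a Clifford unitary, so that $\mathcal{M}(\psi)$ becomes $\proj{0}^{\otimes(n-k)}\otimes(I/d)^{\otimes k}$, and reads the counts off this explicit state. You instead work intrinsically in the symplectic space $V^n$: expanding $\mathcal{M}(\psi)$ in the Weyl basis and using linear independence of distinct Weyl operators (together with $|\Xi_{\mathcal{M}(\psi)}(\vec x)|=1$ on $G_\psi$) to identify the commutant with the symplectic complement $G_\psi^\perp$, and then counting via non-degeneracy, $|G_\psi|\cdot|G_\psi^\perp|=d^{2n}$. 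What your approach buys is that it avoids invoking the transitivity of the Clifford group on sets of independent commuting Weyl operators (the existence of $U_{cl}$ with $U_{cl}g_iU_{cl}^\dag = Z_i$ is itself a nontrivial fact), it makes the ``only if'' direction and its converse symmetric and immediate, and it localizes the one non-formal step — passing from commutation with the sum to commutation with each term — which the paper glosses over entirely. What the paper's normal form buys is a concrete picture of the fixed point that it reuses verbatim in the entropy characterization (Theorem~3). One point where you are more careful than the source: read literally, the number of Weyl operators in $G_\psi^\perp$ but not in $G_\psi$ is $d^{n+k}-d^{n-k}$, not $d^{2k}$; the stated count $d^{2k}$ only holds for the quotient $G_\psi^\perp/G_\psi$ (equivalently, counting modulo the stabilizer group), and you flag this identification explicitly while the paper's proof simply asserts the number without justification.
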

The proof is presented in Appendix~\ref{appen:prop_con}.
Based on this result, the CG magic classes can be recognized according to the number of symmetries with respect to Weyl operators at the fixed points of the CG.

{\bf\textit{ Entropy characterization---}}Now that we have established that the CG can categorize $n$-qudit pure states into $n+1$ CG magic class, let's explore this classification based on the quantum entropy of the fixed points. Given that mean states $\cM(\rho)$ exhibit a flat spectrum, we have the following result, providing an entropy-based characterization of the CG magic class.

\begin{thm}[\bf Entropy characterization of CG magic class]\label{thm:entro}
    Any $n$-qudit pure state $\ket{\psi}$ is in the $k$-th CG magic class, if and only if the von Neumann entropy of the fixed point $\mathcal{M}(\psi)$ is $k\log d$, i.e., $S(\mathcal{M}(\psi))=k\log d$.
\end{thm}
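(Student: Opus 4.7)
My strategy is to reduce the entropy computation to a purity computation, exploiting the Weyl-basis expansion of the mean state together with the assertion (made just before the theorem) that $\mathcal{M}(\psi)$ has a flat spectrum. By the labeling introduced just above the theorem (``$\psi$ is in the $k$-th CG magic class when the stabilizer group size is $d^{n-k}$''), the task reduces to showing
\begin{equation*}
S(\mathcal{M}(\psi)) \;=\; n\log d \,-\, \log|G_\psi|,
\end{equation*}
which under the substitution $|G_\psi| = d^{n-k}$ becomes exactly $k\log d$.

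First I would expand the mean state via Eq.~\eqref{0109shi6},
\begin{equation*}
\mathcal{M}(\psi) \;=\; \frac{1}{d^n}\sum_{\vec x \in G_\psi}\Xi_\psi(\vec x)\,w(\vec x),
\end{equation*}
with every coefficient satisfying $|\Xi_\psi(\vec x)| = 1$. Because $\mathcal{M}(\psi)$ is Hermitian, applying the Weyl orthogonality $\Tr{w(\vec x)^\dag w(\vec y)} = d^n\,\delta_{\vec x,\vec y}$ immediately gives
\begin{equation*}
\Tr{\mathcal{M}(\psi)^2} \;=\; \frac{1}{d^{2n}}\sum_{\vec x \in G_\psi}|\Xi_\psi(\vec x)|^2\, d^n \;=\; \frac{|G_\psi|}{d^n}.
\end{equation*}
Flat spectrum means $\mathcal{M}(\psi) = r^{-1}\Pi$ for a rank-$r$ projector $\Pi$, so the purity equals $1/r$ and hence $r = d^n/|G_\psi|$ and $S(\mathcal{M}(\psi)) = \log r = n\log d - \log|G_\psi|$. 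Since the map $|G_\psi| \mapsto \log(d^n/|G_\psi|)$ is a bijection between the $n+1$ allowed stabilizer-group sizes and the $n+1$ allowed entropy values $\{0,\,\log d,\,\ldots,\,n\log d\}$, both directions of the theorem follow at once.

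The only step that is not purely computational is the flat-spectrum property itself, which the theorem statement takes as input. I would justify it by observing that $G_\psi$ together with its unimodular phases $\Xi_\psi(\vec x)$ is an abelian subgroup of the (projective) Weyl group of order $d^{n-k}$; by the standard Clifford normal-form theorem for such subgroups, there exists a Clifford unitary $U_C$ that conjugates $\{\Xi_\psi(\vec x)\,w(\vec x)\}_{\vec x\in G_\psi}$ to the group generated by $Z_1,\ldots,Z_{n-k}$. Substituting this back into the Weyl expansion yields $U_C \mathcal{M}(\psi) U_C^\dag = |0\rangle\langle 0|^{\otimes(n-k)} \otimes (I/d)^{\otimes k}$, exhibiting the flat spectrum and the rank $d^k$ simultaneously. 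This structural reduction, rather than the entropy arithmetic, is where I anticipate the only real technical work.
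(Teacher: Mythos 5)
Your proposal is correct, and it reaches the result by a partially different route from the paper. The paper's proof goes straight for the normal form: it writes $\mathcal{M}(\psi)=\frac{1}{d^n}\prod_{i\leq n-k}\bigl(\sum_{k_i\in\mathbb{Z}_d}g_i^{k_i}\bigr)$, conjugates the generators $g_i$ to $Z_i$ by a Clifford unitary, identifies $U_{cl}\mathcal{M}(\psi)U_{cl}^\dag=\proj{0}^{\otimes(n-k)}\otimes(I/d)^{\otimes k}$, and reads off $S=k\log d$ from unitary invariance of the entropy. You instead first derive the clean identity $S(\mathcal{M}(\psi))=n\log d-\log|G_\psi|$ by computing the purity $\Tr{\mathcal{M}(\psi)^2}=|G_\psi|/d^n$ from Weyl orthogonality and converting purity to entropy via the flat-spectrum property that the paper asserts just before the theorem; the bijection between the $n+1$ group sizes and the $n+1$ entropy values then gives both directions of the iff at once, which the paper leaves implicit. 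The purity computation is a genuinely nice shortcut if one takes the flat spectrum as given (it is a cited property of mean states), and it makes transparent that the entropy depends only on $|G_\psi|$. That said, your final step --- justifying the flat spectrum by Clifford-conjugating the stabilizer group to $\langle Z_1,\ldots,Z_{n-k}\rangle$ and exhibiting $\proj{0}^{\otimes(n-k)}\otimes(I/d)^{\otimes k}$ --- is exactly the paper's entire proof, and it already yields the entropy directly, so in your write-up the purity argument ends up logically redundant. Either keep the purity route and cite the flat spectrum as a known property of mean states, or keep only the normal-form argument; you do not need both.
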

The proof is presented in Appendix~\ref{appen:prop_con}.
In practical scenarios, obtaining the fixed point exactly is impractical due to the need for an infinite number of self-convolution iterations. After recognizing that the entropy of fixed points varies by at least $\log d$ across different classes, our goal is to ensure that the entropy of the output state after the $L$-th self-convolution closely approximates that of the fixed point, within a margin of $\frac{1}{2}\log d$. Utilizing the entropic quantum central limit theorem~\cite{BGJ24a}, we can establish the following constraint on the entropy difference for the  $L$-th self-convolution
\begin{eqnarray}
\nonumber && |S(\mathcal{M}(\psi))- S(\boxtimes_{L}\psi)|\\
   &\leq& \log\left[1+(1-MG(\psi))^{2^{L+1}-2}e^{S(\mathcal{M}(\psi)}
   \right]\\
  \nonumber &\approx& \log\left[1+e^{-MG(\psi)(2^{L+1}-2)}e^{S(\mathcal{M}(\psi)} \right],
\end{eqnarray}
where the magic gap is defined as
$
MG(\rho)=1-\max_{\vec x\in  \text{Supp}(\Xi_{\rho}): |\Xi_{\rho}(\vec x)|\neq 1}|\Xi_{\rho}(\vec x)|
$~\cite{BGJ23a,BGJ23b}.
If $\set{\vec x\in  \text{Supp}(\Xi_{\rho}): |\Xi_{\rho}(\vec x)|\neq 1}=\emptyset$, define  $MG(\rho)=0$, i.e., there is no gap on the support of the characteristic function. As a result, 
$L=O\left(\log\left(n\frac{\log d}{\log (1-MG(\psi))^{-1}}\right)\right)$ is sufficient for telling the CG magic class of a given pure state.

Note that the entropy characterization of CG magic classes is inspired by the entanglement entropy characterization of quantum phases of matter. For instance, in the finite-size 1D TFI model, the RG fixed point of the ferromagnetic phase exhibits an entanglement entropy of $\log 2$, while that of the paramagnetic phase is zero. These entropies correspond to different ground state degeneracies at the thermodynamic limit and characterize their respective phases.

\section{Conclusion and Discussion}

In this work, we propose a framework to classify quantum states according to their magic features based on two approaches: the circuit magic class and the CG magic class, and build a connection between them. In this procedure, we introduce and construct CG as an iterative coarse-graining map
to implement the classification, which turns out to be efficient due to its rapid convergence behavior. We also discuss two other characterizations of magic classes, including the symmetries and quantum entropy of the CG fixed points. This coarse-graining method to extract magic indicates that the 
magic is a macroscopic feature.

In addition to the results in this work, it is also intriguing to investigate the behavior of states in 
typical quantum many-body systems under the CG flow. Exploring this direction may provide some new insights into the classification of quantum phases of matter.

\section*{Acknowledgements}

We are grateful to Roy Garcia, Bertrand Halperin, Yichen Hu, Zhu-Xi Luo, Tomohiro Soejima, Jonathan Sorce, Brian Swingle, Jie Wang, Yasushi Yoneta and Carolyn Zhang for useful discussions. KB and AJ are supported in part by the ARO Grant W911NF-19-1-0302 and the ARO
MURI Grant W911NF-20-1-0082. ZW is supported by the Society of Fellows at Harvard University.

\bibliography{reference}{}
\clearpage
\newpage
\onecolumngrid
\appendix
\section{Quantum convolutions on DV systems}\label{appen:prop_con}
In a series of work~\cite{BGJ23a,BGJ23b,BGJ23c,BGJ24a,BJ24a}, the framework of quantum convolutions on DV quantum systems have been built. Here, we recall some basic properties of quantum convolutions.  First, the quantum convolution $\boxtimes_{s,t}$ defined in \eqref{eq:conv_B} on qudit-system for odd prime $d$ have the following properties.

\begin{lem}[\cite{BGJ23a,BGJ23b}]\label{lem:key_tech}
The quantum convolution $\boxtimes_{s,t}$ satisfies the following properties:
     \begin{enumerate}
    \item {\bf Convolution-multiplication duality:} $$\Xi_{\rho\boxtimes_{s,t}\sigma}(\vec x)=\Xi_{\rho}(s\vec x)\Xi_{\sigma}(t\vec x),$$ for any $\vec x\in V^n\times V^n$.
    \item  {\bf Convolutional stability:} If both $\rho$ and $\sigma$ are stabilizer states,  then 
$\rho\boxtimes_{s,t}\sigma$ is still a stabilizer state.
    \item {\bf Quantum central limit theorem:} The iterated convolution $\boxtimes^N\rho$   converges to the stabilizer state $\mathcal{M}(\rho)$ as $N\to \infty$.  

\item {\bf Quantum maximal entropy principle:} $S(\rho)\leq S(\mathcal{M}(\rho))$.

\item{\bf Commutativity with Clifford unitaries: }for any Clifford unitary $U$, there exists some Clifford unitary $V$ such
that $(U\rho U^\dag)\boxtimes (U\sigma U^\dag)=V(\rho\boxtimes \sigma)V^\dag$ for any input states $\rho$ and $\sigma$.

\end{enumerate}
\end{lem}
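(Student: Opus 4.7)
The plan is to prove the five items roughly in order, since (1) is the computational engine that feeds into all the others. For part (1), I would compute the characteristic function of $\rho\boxtimes_{s,t}\sigma$ directly from the definition: by cyclicity,
\begin{equation}
\Xi_{\rho\boxtimes_{s,t}\sigma}(\vec x)=\trace\!\Big[(\rho\otimes\sigma)\,U_{s,t}^\dag\big(w(-\vec x)\otimes I_B\big)U_{s,t}\Big].
\end{equation}
The core identity $U_{s,t}^\dag\big(w(\vec x)\otimes I\big)U_{s,t}=w(s\vec x)\otimes w(-t\vec x)$ (up to a tracked phase) then follows from the defining action $U_{s,t}\ket{\vec i}\ket{\vec j}=\ket{s\vec i+t\vec j}\ket{-t\vec i+s\vec j}$ by checking how it conjugates the Weyl generators $X$ and $Z$ and absorbing the phase convention $w(p,q)=\xi^{-pq}Z^pX^q$. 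Tensor-factoring the trace across the bipartition then yields the product formula. For part (2), a state is stabilizer iff $|\Xi(\vec x)|\in\{0,1\}$ with nonzero support equal to a coset of an abelian isotropic subgroup of $V^n$; by (1), $|\Xi_{\rho\boxtimes\sigma}(\vec x)|=|\Xi_\rho(s\vec x)||\Xi_\sigma(t\vec x)|\in\{0,1\}$, and since multiplication by the units $s,t$ is a group automorphism of $V^n$ (as $d$ is prime and $s,t$ are nonzero), the support is again a coset of an abelian subgroup.

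For part (3), iterating (1) yields the closed form
\begin{equation}
\Xi_{\boxtimes^N\rho}(\vec x)=\prod_{k=0}^N\Xi_\rho\!\big(s^{N-k}t^k\vec x\big)^{\binom{N}{k}}.
\end{equation}
The stabilizer subgroup $G_\rho=\{\vec x:|\Xi_\rho(\vec x)|=1\}$ is a subgroup of $V^n$ (Lemma~12 of~\cite{BGJ23b}), hence closed under multiplication by the units $s,t$. Therefore $\vec x\notin G_\rho$ implies every argument $s^{N-k}t^k\vec x$ also lies outside $G_\rho$, each factor has magnitude at most $1-MG(\rho)<1$, and the product decays doubly exponentially in $N$. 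On $G_\rho$ every factor has unit magnitude, and the limiting phase is verified to equal $\Xi_{\mathcal M(\rho)}$ by a short phase consistency check, giving convergence of $\boxtimes^N\rho$ to $\mathcal M(\rho)$.

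For part (4), I would prove the one-step bound $S(\rho)\le S(\boxtimes\rho)$ and then iterate. Writing $\omega_{AB}=U_{s,t}(\rho\otimes\rho)U_{s,t}^\dag$, subadditivity yields $2S(\rho)=S(\omega_{AB})\le S(\omega_A)+S(\omega_B)$; a symmetry argument (Clifford-relating the two marginals via part (5), using that interchanging $A\leftrightarrow B$ conjugates $U_{s,t}$ by the swap, which is Clifford) shows $S(\omega_A)=S(\omega_B)=S(\boxtimes\rho)$, so $S(\rho)\le S(\boxtimes\rho)$. Iterating with (3) and using lower semicontinuity of the von Neumann entropy then gives $S(\rho)\le S(\mathcal M(\rho))$. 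For part (5), use that any Clifford $U$ acts on Weyl operators as $U^\dag w(\vec x)U=\phi_U(\vec x)w(S_U\vec x)$ for a symplectic $S_U$ and tracked phase $\phi_U$; then by (1), matching $\Xi_{(U\rho U^\dag)\boxtimes(U\sigma U^\dag)}(\vec x)$ against $\Xi_{V(\rho\boxtimes\sigma)V^\dag}(\vec x)$ forces $V$ to be the Clifford lift with symplectic part $S_U$ and phase assembled from $\phi_U(s\,\cdot)\phi_U(t\,\cdot)$; such a $V$ exists precisely because the relation $s^2+t^2\equiv 1\pmod d$ ensures the assembled phase is consistent.

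\emph{The main obstacle} is part (3). Decay of $|\Xi_{\boxtimes^N\rho}|$ outside $G_\rho$ fundamentally relies on $G_\rho$ being closed under scalar multiplication, which is a nontrivial structural input (that $G_\rho$ is an abelian isotropic subgroup, not merely a set where $|\Xi_\rho|=1$). Once characteristic-function convergence is in hand, one still has to pin down the limiting phases on $G_\rho$ so that the limit state equals $\mathcal M(\rho)$ as defined in~\eqref{0109shi6}; this phase consistency is the genuinely delicate piece, and with (3) secured, parts (4) and (5) follow almost formally from (1).
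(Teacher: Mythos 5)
This lemma is not proved in the paper at all: it is imported verbatim from \cite{BGJ23a,BGJ23b} (with \cite{BGJ23c} for $d=2$), so there is no internal argument to compare against and your proposal is a reconstruction of the cited works. Your parts (1), (2) and (5) are right in outline, with two small repairs needed: conjugation actually gives $U_{s,t}^\dag\,(w(\vec x)\otimes I)\,U_{s,t}=w(s\vec x)\otimes w(t\vec x)$ with \emph{no} residual phase precisely because $s^2+t^2\equiv 1$ (your $w(-t\vec x)$ on the second factor would produce $\Xi_\sigma(-t\vec x)$, not $\Xi_\sigma(t\vec x)$); and for (2) the criterion $|\Xi(\vec x)|\in\{0,1\}$ characterizes only pure stabilizer states, so you must first reduce to the pure case using bilinearity of $\boxtimes_{s,t}$, since the paper's stabilizer states include convex combinations.

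The genuine gap is in part (4). You assert $S(\omega_A)=S(\omega_B)$ because interchanging $A\leftrightarrow B$ conjugates $U_{s,t}$ by the swap. But $\mathrm{SWAP}\,U_{s,t}\,\mathrm{SWAP}=U_{s,-t}$, so $\omega_B=\rho\boxtimes_{s,-t}\rho$ with characteristic function $\Xi_\rho(s\vec x)\,\overline{\Xi_\rho(t\vec x)}$ — generically a different state, with a different spectrum, from $\omega_A=\rho\boxtimes_{s,t}\rho$ whose characteristic function is $\Xi_\rho(s\vec x)\,\Xi_\rho(t\vec x)$. Part (5) relates two instances of the \emph{same} convolution and cannot bridge $\boxtimes_{s,t}$ and $\boxtimes_{s,-t}$. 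Subadditivity then only yields $2S(\rho)\le S(\rho\boxtimes_{s,t}\rho)+S(\rho\boxtimes_{s,-t}\rho)$, which is not the one-step monotonicity you need. The short route used in the cited works is to note that $\rho\mapsto\mathcal M(\rho)$ is a Weyl twirl over the symplectic complement of $G_\rho$, i.e.\ a mixed-unitary (unital) channel applied to $\rho$, whence $S(\mathcal M(\rho))\ge S(\rho)$ by majorization, with no iteration or limit required. A second, smaller issue sits in part (3): on $G_\rho$ your closed form gives $\Xi_{\boxtimes^N\rho}(\vec x)=\Xi_\rho(\vec x)^{(s+t)^N\bmod d}$, which is periodic in $N$ rather than convergent to $\Xi_\rho(\vec x)$ unless $s+t\equiv 1\pmod d$ or the relevant phases are trivial; this is exactly why the main text states that the fixed points are mean states only ``up to Weyl operators,'' and it is more than the ``short phase consistency check'' you allot to it.
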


To the consider $d=2$ case, 
we need to change
the definition of quantum convolution to the one  used in~\cite{BGJ23c}.

\begin{Def}[\bf Key Unitary for qubit systems]\label{def:key_U}
The key unitary $V$ for  $3$ quantum systems, with each system containing $n$ qubits, is 
\begin{align}\label{eq:con_cir}
V:
=U_{1,n+1,2n+1}\ot U_{2,n+2,2n+2}\ot ...\ot U_{n, 2n, 3n}.
\end{align}
Here  $U$ is a $3$-qubit unitary constructed using CNOT gates:
\begin{eqnarray}
U:=\left(\prod^3_{j=2}CNOT_{j\to 1}\right)\left(\prod^3_{i=2}CNOT_{1\to i}\right),
\end{eqnarray}
and 
$
CNOT_{2\to 1}\ket{x}\ket{y}=\ket{x+y}\ket{y}
$ for any $x,y\in\mathbb{Z}_2$. 
\end{Def}

\begin{Def}[\bf Convolution of three states  for qubit systems]\label{Def:conv_qubit}
Given $K$ states $\rho_1,\rho_2,\rho_3$, each with $n$-qubits, the multiple convolution $\boxtimes_3$ 
of $\rho_1,\rho_2,\rho_3$ maps to an $n$-qubit state: 
\begin{align}\label{eq:qub_con}
\boxtimes_{3}(\rho_1,\rho_2,\rho_3)=\boxtimes_3(\ot^3_{i=1}\rho_i)=\Ptr{1^c}{V(\ot^3_{i=1}\rho_i )V^\dag}\;.
\end{align}
Here $V$ is the key unitary in Definition~\ref{def:key_U}, and
$\Ptr{1^c}{\cdot}$ denotes the partial trace taken on the subsystem $2, 3$, i.e., 
 $\Ptr{1^c}{\cdot}=\Ptr{2,3}{\cdot}$.
\end{Def}

Note that the quantum convolution on qubit-system also satisfy the 
properties in Lemma \ref{lem:key_tech}. The details can be found in 
\cite{BGJ23c}.

\section{The proofs of the Theorems}

\begin{thm} [Restatement of Theorem~\ref{thm:equiv}]
    For any pure $n$-qudit states $\ket{\psi}$ and $\ket{\phi}$, if they are in the same circuit magic phases, they are in the same CG magic classes. 
\end{thm}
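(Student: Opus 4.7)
The plan is to prove the stronger Clifford-equivariance of the mean state, namely $\mathcal{M}(U_C \rho U_C^\dag) = U_C \mathcal{M}(\rho) U_C^\dag$ for every Clifford unitary $U_C$ and every state $\rho$. Granting this identity, the theorem is immediate: if $\phi = U_C \psi U_C^\dag$, then the CG fixed points $\mathcal{M}(\psi)$ and $\mathcal{M}(\phi)$ differ by the same Clifford $U_C$, which is exactly the definition of being in the same CG magic class.

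To establish the identity I would work in the characteristic-function picture. Clifford unitaries act on the Weyl basis as $U_C w(\vec x) U_C^\dag = e^{i \theta(\vec x)} w(S \vec x)$, where $S$ is a symplectic automorphism of $V^n$ and $\theta(\vec x)$ is a phase. Substituting this into $\Xi_{U_C \rho U_C^\dag}(\vec x) = \Tr{\rho\, U_C^\dag w(-\vec x) U_C}$ and relabeling yields
\begin{align}
\Xi_{U_C \rho U_C^\dag}(\vec x) = e^{i \eta(\vec x)} \Xi_\rho(S^{-1} \vec x),
\end{align}
for some phase $\eta(\vec x)$, so in particular $|\Xi_{U_C \rho U_C^\dag}(\vec x)| = |\Xi_\rho(S^{-1}\vec x)|$. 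Hence the unit-modulus support defining the mean state in \eqref{0109shi6} is transported by $S$, and plugging the same transformation rule into the Weyl expansion of $\mathcal{M}(\rho)$ shows that the characteristic functions of $U_C \mathcal{M}(\rho) U_C^\dag$ and $\mathcal{M}(U_C \rho U_C^\dag)$ agree pointwise on $V^n$, forcing the two operators to coincide.

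The main bookkeeping hurdle I expect is the careful tracking of the phases $\theta(\vec x)$ and $\eta(\vec x)$, and verifying that the argument carries through for $d=2$ under the modified qubit convolution of Definition~\ref{Def:conv_qubit}. An alternative route that bypasses the explicit phase analysis is to apply the commutativity-with-Cliffords property from Lemma~\ref{lem:key_tech} inductively: it gives $\boxtimes_L(U_C \psi U_C^\dag) = V_L (\boxtimes_L \psi) V_L^\dag$ for some Clifford $V_L$ at each step. Since the Clifford group modulo the Weyl group is finite for fixed $n$ and $d$, one can extract a subsequence along which $V_L$ is constant; sending $L \to \infty$ and invoking the quantum central limit theorem of Lemma~\ref{lem:key_tech} on both sides then produces $\mathcal{M}(\phi) = V \mathcal{M}(\psi) V^\dag$ for a single Clifford $V$, again placing $\psi$ and $\phi$ in the same CG magic class.
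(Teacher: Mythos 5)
Your primary argument is correct, and it takes a genuinely different route from the paper. The paper proves the theorem through the convolution itself: it invokes the commutativity property $\boxtimes(U\rho U^\dag)=V(\boxtimes\rho)V^\dag$ from Lemma~\ref{lem:key_tech}, iterates it to get $\boxtimes_L(U\rho U^\dag)=V_L(\boxtimes_L\rho)V_L^\dag$, and then passes to the limit to conclude that the mean states are Clifford-related. Your main route instead proves the equivariance $\mathcal{M}(U_C\rho U_C^\dag)=U_C\mathcal{M}(\rho)U_C^\dag$ directly from the definition \eqref{0109shi6}, using only the covariance of characteristic functions under the symplectic action of the Clifford group; this bypasses the convolution and the central limit theorem entirely. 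It is cleaner in two respects: the phase bookkeeping you worry about is actually automatic (the same phase $e^{i\eta(\vec x)}$ multiplies both $\Xi_{U_C\rho U_C^\dag}$ and $\Xi_{U_C\mathcal{M}(\rho)U_C^\dag}$, and only the modulus enters the definition of the mean state, so the two characteristic functions agree pointwise without any cancellation argument), and it yields the stronger conclusion that the connecting Clifford is $U_C$ itself. Your alternative route is essentially the paper's proof, but you correctly flag a subtlety the paper glosses over: the Clifford $V_L$ depends on $L$, so one cannot simply ``send $L\to\infty$'' on both sides; your finiteness-of-the-Clifford-group subsequence argument is one legitimate way to close that gap, though the direct equivariance argument makes it unnecessary. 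One minor caveat: for $d=2$ the convolution of Definition~\ref{Def:conv_qubit} is ternary, so the $L$-fold iteration in your alternative route needs to be restated accordingly, but your primary argument is insensitive to this since it never touches the convolution.
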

\begin{proof}
Based on the commutativity of quantum convolution with Clifford unitaries~\cite{BGJ23b,BGJ23c}, i.e, 
for any Clifford unitary $U$, there exists some 
Clifford unitary $V$ such that 
\begin{eqnarray}
    \boxtimes(U\rho U^\dag)=V(\boxtimes \rho)V^\dag
\end{eqnarray}
Hence, for $\boxtimes_L$, there also exists a some Clifford unitary $V_L$ such
that
\begin{eqnarray}
    \boxtimes_L (U\rho U^\dag)=V_L(\boxtimes_L \rho)V^\dag_L
\end{eqnarray}
Hence, the fixed point of the RG on $\psi$ and $\phi$, i..e, the mean states $\mathcal{M}(\psi)$ and $\mathcal{M}(\phi)$,  are connected by a Clifford unitary. 
Hence, they are in the same CG magic classes.

\end{proof}

\begin{thm}[Restatement of Theorem~\ref{thm:sym}]
    For any $n$-qudit pure state $\ket{\psi}$, it is in the $k$-th CG magic class if and only if the number of the Weyl operators $w(\vec a)$ such that $[\mathcal{M}(\psi), w(\vec a)] = 0$ is $d^{n+k}$.

Equivalently, $\ket{\psi}$ is in the $k$-th CG magic class if and only if the number of the Weyl operators $w(\vec a)$ such that: 1. $[\mathcal{M}(\psi), w(\vec a)]=0$ and 2. $w(\vec a)$ is not in the stabilizer group 
of $\mathcal{M}(\psi)$,
is $d^{2k}$.
\end{thm}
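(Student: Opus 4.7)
The plan is to translate the operator-algebraic condition $[\mathcal{M}(\psi), w(\vec a)] = 0$ into a linear-algebraic one on the symplectic space $V^n \cong \mathbb{Z}_d^{2n}$, and then count solutions using the dimension formula for symplectic complements. Since $\psi$ being in the $k$-th CG magic class is defined by $|G_\psi| = d^{n-k}$, the target count $d^{n+k} = d^{2n}/d^{n-k}$ is exactly the size of the symplectic complement of an $(n-k)$-dimensional subspace, which is the structural hint guiding the proof.

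First, I would expand the mean state in the Weyl basis as
\[
\mathcal{M}(\psi) = \frac{1}{d^n}\sum_{\vec x \in G_\psi}\Xi_{\mathcal{M}(\psi)}(\vec x)\, w(\vec x),
\]
where $|\Xi_{\mathcal{M}(\psi)}(\vec x)|=1$ on $G_\psi$. Using the Weyl commutation relation $w(\vec a)w(\vec x) = \omega_d^{\sigma(\vec x,\vec a)}\, w(\vec x) w(\vec a)$, with $\sigma$ the standard symplectic form on $V^n$, the commutator becomes
\[
[\mathcal{M}(\psi), w(\vec a)] = \frac{1}{d^n}\sum_{\vec x \in G_\psi}\Xi_{\mathcal{M}(\psi)}(\vec x)\bigl(\omega_d^{\sigma(\vec x,\vec a)}-1\bigr)w(\vec a)w(\vec x).
\]
By linear independence of distinct Weyl operators, together with $\Xi_{\mathcal{M}(\psi)}(\vec x)\neq 0$ on $G_\psi$, this commutator vanishes iff $\sigma(\vec x, \vec a) \equiv 0 \pmod d$ for every $\vec x \in G_\psi$, i.e., iff $\vec a \in G_\psi^\perp$ (the symplectic complement).

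Next, I would invoke that $G_\psi$ is actually a $\mathbb{Z}_d$-linear subspace of $V^n$ of dimension $n-k$, which uses Lemma 12 of~\cite{BGJ23b} together with the odd prime hypothesis on $d$ (so that $\mathbb{Z}_d$ is a field). Non-degeneracy of $\sigma$ then gives $\dim G_\psi + \dim G_\psi^\perp = 2n$, so $|G_\psi^\perp| = d^{n+k}$, proving the first statement. For the equivalent second statement, abelianness of $G_\psi$ translates to isotropy $G_\psi \subseteq G_\psi^\perp$, so the quotient $G_\psi^\perp/G_\psi$ has order $d^{n+k}/d^{n-k} = d^{2k}$; that is, there are $d^{2k}$ equivalence classes of commuting Weyl operators modulo the stabilizer group $G_\psi$.

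The main obstacle is the passage from the operator-level identity $[\mathcal{M}(\psi), w(\vec a)] = 0$ to the symplectic-level condition $\vec a \in G_\psi^\perp$. This requires both linear independence of the Weyl basis (to preclude accidental cancellations among the $\omega_d^{\sigma(\vec x,\vec a)}-1$ coefficients) and that $G_\psi$ is a genuine $\mathbb{Z}_d$-linear subspace of $V^n$ rather than merely an abelian subgroup of the Weyl group. Once these are in place, the remainder of the argument is routine symplectic linear algebra over $\mathbb{Z}_d$.
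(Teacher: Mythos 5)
Your proof is correct, and it takes a somewhat different route from the paper's. The paper's argument also starts from Lemma 12 of Ref.~\cite{BGJ23b} (so that $G_\psi$ is an abelian group of order $d^{n-k}$), but then passes to a canonical form: it chooses generators $g_1,\dots,g_{n-k}$, finds a Clifford unitary $U_{cl}$ with $U_{cl}\tinyspace g_i U_{cl}^\dagger = Z_i$, and reads off the count from $\mathcal{M}(\psi)\cong \proj{0}^{\otimes(n-k)}\otimes(I/d)^{\otimes k}$ --- though it leaves the final counting step essentially unstated. You instead work basis-free: you reduce $[\mathcal{M}(\psi),w(\vec a)]=0$ to the symplectic condition $\vec a\in G_\psi^{\perp}$ using linear independence of the Weyl basis, and then count via $\dim G_\psi+\dim G_\psi^{\perp}=2n$. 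The two arguments are mathematically equivalent (Clifford conjugation is exactly a symplectic change of basis), but your version makes explicit the two points the paper glosses over: why commutation with $\mathcal{M}(\psi)$ is precisely symplectic orthogonality to $G_\psi$ (no accidental cancellations among the coefficients $\omega_d^{\sigma(\vec x,\vec a)}-1$), and where the numbers $d^{n+k}$ and $d^{2k}$ actually come from; it also cleanly covers both directions of the ``if and only if,'' since every step is an equivalence. One small point: you silently read the second statement as counting cosets in $G_\psi^{\perp}/G_\psi$ rather than literally the elements of $G_\psi^{\perp}\setminus G_\psi$ (of which there are $d^{n+k}-d^{n-k}$); that is the right reading and matches the theorem's intent, but it deserves an explicit sentence.
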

\begin{proof}

First, $\psi$ is the in the $k$-th CG magic class  iff the set of Weyl operators $G_{\psi}=\set{w(\vec x):\Xi_{\mathcal{M}(\psi)}(\vec x)\neq 0}$ 
form an abelian subgroup with the size being $d^{n-k}$ for some integer $k\leq n$, by the Lemma 12 in ~\cite{BGJ23b}. 
That is, $G_{\psi}$ can be generated by $k$ Weyl operators, labeled as $g_1,...,g_{n-k}$, and the state $\mathcal{M}(\psi)$ can be written 
as $\mathcal{M}(\psi)=\frac{1}{d^n}\Pi_{i\leq n-k}(\sum_{k_i\in \mathbb{Z}_d}g^{k_i}_i)$.
    Hence, we can find some Clifford unitary $U_{cl}$ such that $U_{cl}g_iU^\dag_{cl}=Z_i$,  where $Z_i$ is the Pauli Z operator on the $i$-th qudit. 
Therefore, the number of  the Weyl operator  1. $[\mathcal{M}(\psi), g_i]=0$ and 2. $w(\vec a)$ is not in the stabilizer group 
of $\mathcal{M}(\psi)$,
is $d^{2k}$. 

\end{proof}

\begin{thm}[Restatement of Theorem~\ref{thm:entro}]
     For any pure state $\psi$, it is in the $k$-th CG magic class if and only if the von Neumann entropy of the fixed point $\mathcal{M}(\psi)$ is $k\log d$, i.e., $S(\mathcal{M}(\psi))=k\log d$.
\end{thm}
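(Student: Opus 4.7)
The plan is to read off the spectrum of $\mathcal{M}(\psi)$ directly from its explicit form as a normalized stabilizer-code projector, so that the von Neumann entropy reduces to $\log(\text{rank})$; the rank in turn is determined by the size of $G_\psi$, which is exactly what defines the CG magic class index $k$.

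First, I would record that by Lemma 12 of \cite{BGJ23b} and the discussion before Theorem~\ref{thm:equiv}, the mean state takes the form
\begin{equation}
\mathcal{M}(\psi) = \frac{1}{d^n} \sum_{\vec x : w(\vec x) \in G_\psi} \Xi_{\mathcal{M}(\psi)}(\vec x)\, w(\vec x),
\end{equation}
with $G_\psi$ an abelian subgroup of the Weyl group and $|\Xi_{\mathcal{M}(\psi)}(\vec x)|=1$ on the support. Absorbing the unit-modulus phases into the group elements organizes $G_\psi$ into a signed abelian subgroup of order $d^{n-k}$, generated by $n-k$ commuting order-$d$ elements $s_1,\ldots,s_{n-k}$.

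Next, I would invoke the Clifford-covariance of the mean state (property 5 of Lemma~\ref{lem:key_tech}) and the argument from the proof of Theorem~\ref{thm:sym}: if $\psi$ lies in the $k$-th CG magic class, there is a Clifford $U_{cl}$ with $U_{cl} s_i U_{cl}^\dagger = Z_i$ for $i=1,\ldots,n-k$. Using the Fourier identity $\sum_{j=0}^{d-1} Z_i^j = d\,\proj{0}_i$, I obtain
\begin{equation}
U_{cl}\,\mathcal{M}(\psi)\,U_{cl}^\dagger = \frac{1}{d^n}\prod_{i=1}^{n-k}\sum_{j=0}^{d-1} Z_i^j = \frac{1}{d^k}\,\proj{0}^{\otimes(n-k)} \otimes I^{\otimes k},
\end{equation}
which is maximally mixed on a $d^k$-dimensional subspace. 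Unitary invariance of the von Neumann entropy then yields $S(\mathcal{M}(\psi))=k\log d$.

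The converse is automatic: the forward direction maps the class index $k \in \{0,1,\ldots,n\}$ bijectively onto the entropy values $\{0,\log d,\ldots,n\log d\}$, so any pure state whose mean-state entropy equals $k\log d$ must sit in the $k$-th class. The only nontrivial input is the stabilizer-group structure of $\mathcal{M}(\psi)$ (flat modulus-one support forming an abelian group), which is the main technical step but is already packaged in the mean-state formalism of \cite{BGJ23b}; no new obstacle arises beyond the Clifford normalization already executed in the proof of Theorem~\ref{thm:sym}.
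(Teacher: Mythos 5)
Your proposal is correct and follows essentially the same route as the paper's own proof: reduce to the normal form of $\mathcal{M}(\psi)$ as a normalized product over the $n-k$ generators of its stabilizer group, conjugate by a Clifford sending the generators to $Z_1,\ldots,Z_{n-k}$ to obtain $\proj{0}^{\otimes(n-k)}\otimes(I/d)^{\otimes k}$, and read off the entropy by unitary invariance. Your explicit handling of the unit-modulus phases and of the converse via bijectivity of $k\mapsto k\log d$ only makes explicit what the paper leaves implicit.
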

\begin{proof}
    Based on the above Theorem, $\psi$ is the in the $k$-th CG magic class if and only if $\mathcal{M}(\psi)=\frac{1}{d^n}\Pi_{i\leq n-k}(\sum_{k_i\in \mathbb{Z}_d}g^{k_i}_i)$, where the stabilizer group is generated by $g_1,...,g_{n-k}$.
    Hence, we can find some Clifford unitary $U_{cl}$ such that $U_{cl}g_iU^\dag_{cl}=Z_i$,  where $Z_i$ is the Pauli Z operator on the $i$-th qudit. Thus, 
    $U_{cl}\mathcal{M}(\psi)U^\dag_{cl}=\frac{1}{d^n}\Pi_{i\leq n-k}(\sum_{k_i\in \mathbb{Z}_d}Z^{k_i}_i)=\proj{0}^{n-k}\ot (\frac{I}{d})^{\ot k}$.
    Hence, 
    \begin{eqnarray}
        S(\mathcal{M}(\psi))
        =S(U_{cl}\mathcal{M}(\psi)U^\dag_{cl})
        =S\left(\proj{0}^{n-k}\ot (I/d)^{\ot k}\right)
        =k\log d.
    \end{eqnarray}
\end{proof}

\end{document}